\theoremstyle{plain}
\newtheorem{thm}{Theorem}
\newcommand{\be}{\begin{eqnarray}}
\newcommand{\ee}{\end{eqnarray}}
\newcommand{\bc}{\begin{center}}
\newcommand{\ec}{\end{center}}
\newcommand{\nn}{\nonumber \\}
\newcommand{\lb}{\label}
\newcommand{\p}[1]{(\ref{#1})}
\begin{document}

\begin{titlepage}

\vspace*{0.2cm}

\renewcommand{\thefootnote}{\star}
\begin{center}

{\LARGE\bf  $Spin(7)$  and generalized $SO(8)$ instantons in eight dimensions.}

\vspace{2cm}

{\Large A.V. Smilga} \\

\vspace{0.5cm}

{\it SUBATECH, Universit\'e de
Nantes,  4 rue Alfred Kastler, BP 20722, Nantes  44307, France. }

\end{center}
\vspace{0.2cm} \vskip 0.6truecm \nopagebreak

   \begin{abstract}
\noindent  

We present a simple compact formula for a topologically nontrivial map $S^7 \to Spin(7)$ associated with the fiber bundle $Spin(7) \stackrel{G_2}{\to} S^7$. The homotopy group  $\pi_7[Spin(7)] = \mathbb{Z}$ brings about the 
topologically nontrivial 8-dimensional gauge field configurations that belong to the algebra 
$spin(7)$. The instantons are  special such configurations that minimize the  functional $\int {\rm Tr} \{F\wedge F \wedge \star(F \wedge F)\} $ and satisfy non-linear self-duality conditions, $ F \wedge F \ =\ \pm  \star (F\wedge F)$. 

$Spin(7) \subset SO(8)$, and $Spin(7)$  instantons represent simultaneously $SO(8)$ instantons of a new type. The relevant homotopy is $\pi_7[SO(8)] = \mathbb{Z} \times \mathbb{Z}$, which implies the existence of two different topological charges. This also holds for all groups $SO(4n)$ with integer $n$. We present explicit expressions for two topological charges and calculate their values for the conventional 4-dimensional and 8-dimensional instantons and also for the 8-dimensional instantons of the new type.
  
Similar constructions for other algebras in different dimensions are briefly discussed. 

   \end{abstract}

\end{titlepage}

\setcounter{footnote}{0}

\setcounter{equation}0

\section{Introduction and Summary}

Eight-dimensional instantons of two varieries have been discussed in the literature. There are the so-called octonionic or secular instantons \cite{FubNic} that satisfy certain modified linear self-duality conditions (these instantons are outside the scope of this paper) and there are $SO(8)$ instantons where the gauge fields satisfy the {\it nonlinear} self-duality conditions

        \be
       \lb{8-self-dual}
       F \wedge F \ =\ \pm  \star (F\wedge F)\,.
        \ee
   $SO(8)$ instantons \cite{Keph,Tchr} are associated with topologically nontrivial mappings $S^7 \to SO(8)$.  In Refs. \cite{Keph,Tchr}, the simplest  such mapping was discussed:
    \be
   \lb{g-Spin8}
  g_8 \ =\ \exp\{i \alpha_m \Gamma_m\}  = \mathbb{1} \cos \alpha + i \frac {\sin \alpha}\alpha \alpha_m \Gamma_m\,,
   \ee
   where $\alpha = \|\alpha_m \| \leq \pi$ and $\Gamma_{m = 1,\ldots, 7}$ are purely imaginary antisymmetric 8-dimensional matrices satisfying
   the Clifford algebra,
          \be 
          \lb{Cliff7} 
          \Gamma_m \Gamma_n + \Gamma_n \Gamma_m \ =\ 2 \delta_{mn} \mathbb{1} \,,
           \ee
           and the relation
            \be
           \lb{prod-gamma}\Gamma_1 \Gamma_2 \Gamma_3 \Gamma_4\Gamma_5\Gamma_6\Gamma_7 \ =\  i \mathbb{1} \, .
              \ee
   The mapping $g_8$ is associated with the instanton satisfying $F \wedge F \ =\ \star (F\wedge F)$ and the mapping $g_8^{-1}$ with the antiinstanton satisfying $F \wedge F \ =\ -\star (F\wedge F)$. The explicit expressions for these instantons are quite analogous to the familiar 4-dimensional BPST instantons \cite{BPST}.   Multiinstanton solutions, the analogs of ADHM solutions \cite{ADHM}, are also known \cite{ADHM-8}. 

The $SO(8)$ instantons satisfying the nonlinear self-duality conditions \p{8-self-dual} have several  interesting physical applications. They were discussed in the string theory context \cite{string} and in the context of the 8-dimensional quantum Hall effect \cite{Hall}. 
   
  All these configurations are characterized by a nonzero integer values of the fourth Chern class \footnote{A ``physical" way to fix the coefficient
 in \p{Chern4} is to require that the Dirac equation on the gauge field background $A_M(x)$  is well defined and has a single 
 zero mode. For the Chern class of order $D/2$ in $R^D$, one derives \cite{kniga}
 \be
 \lb{Chern-D}
 q \ =\ \pm \frac {1}{(2\pi)^{D/2} (D/2)!} \int_{R^D} {\rm Tr} \, \{\overbrace{F \wedge \cdots \wedge F}^{D/2}\}
 \ee
with the sign depending on the convention.}
 \be
 \lb{Chern4}
 q \ =\ \frac 1{384 \pi^4} \int_{R^8} {\rm Tr} \, \{F \wedge F \wedge F \wedge F\}\,.
  \ee
 With this convention, $q=1$ for the instanton and $q = -1$ for the antiinstanton.
 The construction of Ref. \cite{Keph} is quite parallel to the construction of the ordinary 4-dimensional instantons, which are associated with the  mappings $S^3 \to SU(2)$ and have an integer topological charge (second Chern class or Pontryagin number),
  \be
 \lb{Pont}
 q \ =\ \frac 1{8\pi^2} \int_{R^4} {\rm Tr} \, \{F \wedge F \}\,. 
  \ee  
  However, for $SO(8)$ gauge configurations in 8 dimensions, the topology is more complicated than in the BPST case. The relevant homotopy group
  is \cite{Mimura}
     \be
     \lb{ZZ}
   \pi_7[SO(8)] = \mathbb{Z} \times \mathbb{Z}\,,
   \ee
    so that a generic gauge field configuration is characterized by  {\it two} different topological charges. 
  We will show later that the second topological charge is expressed as \cite{Nikita}
    \be
     \lb{Gauss8}
     \tilde {q} \ =\  \frac 1{3 \cdot 2^{11} \cdot \pi^4 } \varepsilon^{a_1  \cdots  a_8} \int F^{a_1 a_2} \wedge F^{a_3 a_4} \wedge F^{a_5 a_6}  \wedge F^{a_7 \,a_8} \,.
      \ee 
    This construction can be generalized to $D=4n$ dimensions with any $n$.  The presence of the second topological charge for the $4n$-dimensional gauge fields belonging to the algebra $so(4n)$ can be understood by tracing the analogy with $4n$-dimensional gravity. The charge \p{Gauss8} has the same structure (and the same coefficient!) as the Gauss-Bonnet invariant, giving the Euler characteristic of an even-dimensional manifold. For $D = 4n$, 
     
      \be
  \lb{Gauss-n}
  \chi \ =\ \frac 1{(2n)! \pi^{2n} 2^{4n} } \varepsilon^{A_1 A_2 \cdots A_{4n-1} A_{4n}} \int  \, R^{A_1 A_2} \wedge \cdots \wedge  R^{A_{4n-1} \,A_{4n}} \,, 
   \ee 
   where $R^{AB} = \frac 12\, R^{AB}_{\ \ MN} \,dx^M \wedge dx^N$ is the curvature 2-form --- the field density for the spin connection form.
   
   The integral representation for the Euler characteristic, the analog of \p{Gauss-n}, can be written for any even-dimensional manifold, also for $D = 4n+2$. Similarly, the topological charge of the type \p{Gauss8} can also be defined for $SO(4n+2)$ gauge fields in $4n+2$ dimensions. However, the Chern class of order $4n+2$, the analog of the integral \p{Chern-D} with an odd number of $F$ factors, {\it vanishes} in this case due to the skew symmetry of $F$.\footnote{One can notice at this point that, while the topological charge \p{Gauss8} has the same structure as the Gauss-Bonnet invariant, the Chern classes \p{Pont} and \p{Chern4} have the same structure as  the {\it Hirzebruch signature}  invariant,
    \be
    \lb{Hirzebruch}
    \tau \ \propto\ \int {\rm Tr} \{R \wedge \ldots \wedge R\}\,,
     \ee
     which is nonzero only for $D = 4n$.}
     This matches well with the topological fact that $\pi_{D-1}[SO(D)]$ has two $\mathbb Z$  factors only for $D = 4n$.
   
   \vspace{1mm}
   
   The expression \p{g-Spin8} gives a mapping of $S^7$ in $SO(8)$.
   Hovewer,  it will be more convenient for us to treat $g_8$ as a matrix that rotates 8-dimensional 8-component real spinors. The set of all such matrices form the so-called $Semispin(8)$ group \cite{semispin}. By triality, this group is isomorphic to $SO(8)$ and has the center $Z_2$ (the matrices $g = \mathbb{1}$ and  $g = -\mathbb{1}$). The double cover of $Semispin(8)$ is $Spin(8)$ with the center $Z_2 \times Z_2$. 
   
   $Semispin(8) \simeq SO(8)$ has a $Spin(7)$ subgroup, and one can be interested in the tolopogically nontrivial mappings $S^7 \to Spin(7)$, which exist due to nontrivial 
   $\pi_7[Spin(7)] \ = \ \mathbb{Z}$, and the associated 8-dimensional instantons with $Spin(7)$  gauge group. 
We may observe at this point that the mapping \p{g-Spin8} {\it is} not such mapping --- the commutators of the generators $\Gamma_m$ in \p{g-Spin8} give 21 generators of $SO(7)$ and, together with $\Gamma_m$, this completes the full $so(8)$ algebra.
   
  As we will see later, the simplest map $S^7 \to Spin(7)$ has a somewhat more complicated  than \p{g-Spin8} form:
    \be
  \lb{g-Spin7}
  g \ =\ \exp \left\{- \frac 12 \alpha_m f_{mnk} \Gamma_n \Gamma_k \right\}\,,
   \ee
   where $f_{mnk}$ are the structure constants of the octonion algebra and $\alpha_m$ is a 7-dimensional vector with a norm not exceeding $\pi$.
   
   We will also see that the map \p{g-Spin7} represents a local section of the fiber bundle $Spin(7) \stackrel{G_2}{\to} S^7$. 
   This map gives rise to the associated $Spin(7)$ instantons, which, evidently, are also $SO(8)$ instantons.
   
   The  expressions  \p{Chern4} and \p{Gauss8} for the topological charges are written for gauge field configurations. There are associated expressions for the topological charges describing the maps   $S^7 \to SO(8)$. They read
   
    \be 
  \lb{q7}
  q \ =\ - \frac 1{2^7 \cdot 3 \cdot 35 \, \pi^4}  \int_{S^7} {\rm Tr} \, \{(g^{-1} dg)^7\}\,.  
  \ee
  and
  \be
  \lb{Gauss-S7}
  \tilde q \ = \ - \frac i{3\cdot 2^{11} \cdot 35 \pi^4} \, \varepsilon^{abcdefgh} \int_{S^7} 
   A^{ab} \wedge (A\wedge A)^{cd} \wedge (A\wedge A)^{ef} \wedge  (A\wedge A)^{gh}\,.
    \ee
    
    We will calculate these charges for the maps \p{g-Spin8} and \p{g-Spin7} and also for the maps $g =  g_7 g_8$ and $g = g_7 g_8^{-1}$.
    The results are quoted in  the table below. The charge $q$ for both $g_8$ and $g_7$ is  unity [this is so due to our choice of the sign in \p{g-Spin7}]. The topological charges of $g^{-1}$ are opposite to the charges of $g$. The charges for the map $g_8 g_7$ is the same as for $g_7 g_8$, etc. It is interesting that the charge $\tilde q$ takes the value $\tilde q =1$ for the map $g_8$, but it is negative for the map $g_7$, involving the extra factor 3.

    \vspace{.5cm}
\bc

\begin{tabular}{l||c|c}
\lb{table8}
  map & $q$ & $\tilde q$ \\
\hline
\hline
$g_8$ & 1 & 1 \\
\hline
$g_7$ & 1 & -3 \\
\hline
$g_7 g_8$ & 2 & -2 \\
\hline 
$g_7 g_8^{-1}$ & 0 & -4 \\
\end{tabular}

\ec
\vspace{.4cm}

{\bf Table 1}. Topological charges of the maps $S^7 \to SO(8)$ and of the associated $SO(8)$ instantons.
  
   \vspace{2mm}

  The topological charges for a generic composite mapping $g = g_8^n g_7^m$ with integer $n,m$ are
\be
q(g_8^n g_7^m) \ =\ n+m, \qquad \tilde q(g_8^n g_7^m) \ =\ n - 3m\,.
\ee

   The plan of the subsequent discussion is the following. In the next section, we first remind the well-known BPST construction for the 4-dimensional $SU(2)$ instantons and then discuss the second topological charge for the $SO(4)$ gauge configurations and present the simple expressions for two types of instantons.  
   
In Sect. 3, we describe the conventional $SO(8)$ instantons associated with the mapping \p{g-Spin8}.

   The heart of the paper is Sect. 4, where we discuss in detail the instanton constructions for $Spin(7)$ and generalized instantons in $SO(8)$.
   
   Sect. 5 is devoted to a brief discussion of
    topologically nontrivial gauge field configurations  in other dimensions with other gauge groups. For $D = 4n$ with any $n$, there exist two types of instantons for the groups $SO(4n)$ and only one such type for the groups $SO(4n-1)$. The corresponding homotopies are  $\pi_{4n-1}[SO(4n)] = \mathbb{Z} \times \mathbb{Z}$ 
    and  $\pi_{4n-1}[Spin(4n-1)] = \mathbb{Z}$.
    
    Topological nontrivial configurations exist also in other cases. We will discuss in particular  6-dimensional fields with $SU(3)$ gauge group. Their nontrivial topology is associated with $\pi_5[SU(3)] = \mathbb{Z}$ and with the fiber bundle 
    $SU(3) \stackrel{SU(2)}{\longrightarrow} S^5$. We present the known explicit expression for the topologically nontrivial map $S^5 \to SU(3)$.

\section{Four dimensions}
 \setcounter{equation}0
\subsection{$SU(2)$}

Consider a 4-dimensional Euclidean gauge field $ A(x) = A_\mu(x) dx_\mu \in su(2)$. Assume that   the field density form
\be 
F(x) = dA(x) - i A(x) \wedge  A(x) \ \equiv \ \frac 12 F_{\mu\nu}\, dx_\mu \wedge dx_\nu\,,
 \ee
 where $F_{\mu\nu} = \partial_\mu A_\nu - \partial_\nu A_\mu - i[A_\mu, A_\nu]$,
 decays fast enough at $x \to \infty$, so that the integral \p{Pont} for the  Pontryagin index
  converges. As is very well known, \p{Pont} is integer in this case. This integer is associated with the nontrivial 
  $\pi_3[SU(2)] = \mathbb{Z}$. 
  
  \vspace{1mm}
  
  Indeed, the form Tr$\{F \wedge F\}$ is exact:
  
  \be 
  \lb{dom3}
 {\rm Tr} \{ F \wedge F\} \ =\ d\omega_3, \qquad {\rm where} \qquad \omega_3 \ =\ {\rm Tr}  \left\{ A \wedge F  +
  \frac {i}3 A \wedge A \wedge A \right\}\,.
  \ee
  Then the integral \p{Pont} boils down to the surface term,
   \be
   q \ =\ \frac 1{8\pi^2} \int_{S^3}  \omega_3 \,, 
    \ee
where $S^3$ is a 3-sphere at large distance from the origin. Bearing in mind that $F$ rapidly decays at large $x$, the potential $A$ acquires at large distances a pure gauge form\footnote{See Ref.  \cite{Uhlen} for a mathematical proof.} 
 \be
 \lb{pure-gen}
 A \ =\ ig^{-1} d g
  \ee
  with $g \in SU(2)$. We derive
  \be
  \lb{q3}
  q \ =\ \frac 1{24\pi^2} \int_{S^3} {\rm Tr} \, \{g^{-1} dg\, g^{-1} dg\, g^{-1} dg \}\,.
  \ee
The simplest nontrivial mapping $S^3 \to SU(2)$ may be presented as 
 \be
 \lb{g-krugl}
 g \ =\ \frac {\mathbb{1} x_4  - i x_m \sigma_m}{r}\,,
 \ee
 where $\mathbb{1}$ is the unity matrix, $\sigma_{m = 1,2,3}$ are the Pauli matrices and $x_\mu = ( x_m, x_4)$ marks a point on the distant sphere of radius $r = \sqrt{x_\mu x_\mu }$. The gauge  field \p{pure-gen} acquires the form
  \be
  \lb{pure-krugl}
  A_\mu \ =\ \frac {\eta^a_{\mu\nu} x_\nu \,\sigma^a }{r^2} \,,
   \ee
   where $\eta^a_{\mu\nu}$ is the `t Hooft symbol \cite{Hooft-Ans},
    \be
    \eta^a_{m4} \ =\ -  \eta^a_{4m} \ =\ \delta_{am}, \qquad   \eta^a_{mn} = \varepsilon_{amn} \,.
    \ee
 To calculate the integral \p{q3}, we represent it in a little bit more explicit form
 \be
  \lb{q-krugl}
  q \ =\ \frac 1{24\pi^2}  \int_{S^3}   \varepsilon_{\mu\nu\rho\sigma}\, \frac {x_\mu}r \,{\rm Tr} \, \{g^{-1} \partial_\nu g\, g^{-1} \partial_\rho g\, g^{-1} \partial_\sigma g \}\,\,  dV_{S^3}
  \ee
  with the convention $\varepsilon_{1234} = 1$.
 Pick up the north pole of $S^3$, $x_\mu = (\vec{0}, r)$. Then the integrand in 
 \p{q-krugl} reads
 \be
 \lb{integrand}
 - \frac i{24 \pi^2 r^3} \, \varepsilon_{jkl}\,  {\rm Tr} \, \{\sigma_j \sigma_k \sigma_l \} \ =\ \frac 1{2\pi^2 r^3}\,.
  \ee
  Due to rotational symmetry, the value of the integrand is the same at all other points of the sphere. Multiplying \p{integrand} by $V_{S^3} = 2\pi^2 r^3$, we reproduce the result $q=1$.
  
  \vspace{1mm}
  Any 4-dimensional gauge field with the asymptotics \p{pure-krugl} has the topological charge
  $q=1$. But the configurations realizing the minimum of the action functional,
   \be
   \lb{act}
   S \ =\  \int_{R^4} {\rm Tr} \, \{ F \wedge *F\} \ =\ \frac 12 \int d^4x \,
    {\rm Tr} \, \{F_{\mu\nu} F_{\mu\nu} \}\,,
    \ee
    represent  a particular physical interest.  Such configurations with $q=1$ satisfy the self-duality condition, $ F = *F$ or
    \be
    F_{\mu\nu} \ =\ \frac 12 \varepsilon_{\mu\nu\rho\sigma}  F_{\rho\sigma}\,.
    \ee
    The standard BPST self-dual instanton with the center at the origin has the form
    \be
  \lb{BPST}
  A_\mu \ =\  \frac {\eta^a_{\mu\nu}  x_\nu\, \sigma^a}{r^2 + \rho^2} \,, \qquad F_{\mu\nu} \ =\ - \frac {2\rho^2 \eta^a_{\mu\nu} \sigma^a}{(r^2 + \rho^2)^2} \,,
   \ee
  where $\rho$ is a parameter having the meaning of the  instanton size.
  
  There are also self-dual or anti-self-dual configurations with an arbitrary integer topological charge $q$ \cite{ADHM}.   The values of $q$ give the degrees of  the corresponding maps. 
   
  \subsubsection{Exponential parameterization}
  Note (it will be helpful for us later) that one can also describe the mapping $S^3 \to SU(2)$ in an alternative way:
 \be
 \lb{g-ball}
 g\ = \ \exp\{-i\alpha_m \sigma_m\} =  \mathbb{1} \cos \alpha - \frac {i \sin \alpha}\alpha \alpha_m \sigma_m \,,
  \ee
  where  $0 \leq \alpha = \|\vec{\alpha}\| \leq \pi$.
The relation to the standard parameterization \p{g-krugl} is 
  \be
  \lb{varchange}
 \alpha_m \ =\ \frac {x_m}{\|\vec{x}\|} \arccos \left( \frac {x_4}r  \right) \,.
  \ee
  
 The 2-sphere $\|\vec{\alpha}\| = \pi$  maps onto one point of $S^3$ --- its south pole,    $g = - \mathbb{1}$.
 
 In these variables,

\be 
\lb{q3-ball}
  q \ =\ -\frac 1{4\pi^2} \int_{\alpha \leq \pi} d^3 \alpha \, {\rm Tr} \, \{g^{-1} \partial_1 g\, g^{-1} \partial_2 g\, g^{-1} \partial_3 g \}\,,
  \ee
  where
  \be
   g^{-1}\partial_m g = -iA_m  \ =\ -i \left[ \frac {\cos \alpha \sin \alpha}\alpha   \sigma_m -
  \frac {\sin^2 \alpha}{\alpha^2} \alpha_k \varepsilon_{kml} \sigma_l + 
  \frac {\alpha_m \alpha_k}{\alpha^2}\left( 1 - \frac {\cos \alpha \sin \alpha}\alpha \right) \sigma_k \right]\,.
  \ee
  The result of the calculation is, of course, the same:
  \be
  q \ =\ \frac 2\pi \int_0^\pi \sin^2 \alpha \, d\alpha \ =\ 1 \,. 
   \ee
 
 \subsection{$SO(4)$}
      Suppose now that we are interested in the gauge configurations belonging to $so(4)$ rather than $su(2)$ algebra. $so(4)$ is a direct sum of two $su(2)$ summands 
     with the generators $(\Sigma_1^A)_{ab} = -i\eta^A_{ab}/2$ and $(\Sigma_2^A)_{ab} = -i\bar \eta^A_{ab}/2$.\footnote{Then 
     \be
[\Sigma^A_1, \Sigma^B_1] = i \varepsilon^{ABC} \Sigma_1^C, \qquad  [\Sigma^A_2, \Sigma^B_2] = i \varepsilon^{ABC} \Sigma_2^C, \qquad [\Sigma_1^A, \Sigma_2^B] = 0\,,\nn {\rm Tr} \{\Sigma_1^A \Sigma_1^B\} = {\rm Tr} \{\Sigma_2^A \Sigma_2^B\} = \delta^{AB}\,, \qquad {\rm Tr} \{\Sigma_1^A \Sigma_2^B\} = 0 \,.\ee}
      An arbitrary $so(4)$ gauge field can be presented as
     $(A_\mu)_{ab} = B_\mu^A (\Sigma_1^A)_{ab} +  C_\mu^A  (\Sigma_2^A)_{ab}$.
     
      There are thus two types of instantons with the field densities
      \be
      \lb{2-inst}
     F_{\mu\nu}^{ab} \ =\  \frac {2i\rho^2  \eta^A_{\mu\nu} \eta^A_{ab}}{(r^2 + \rho^2)^2}   \quad {\rm and} \quad 
      F_{\mu\nu}^{ab} \ =\  \frac {2i\rho^2  \eta^A_{\mu\nu} \bar \eta^A_{ab}}{(r^2 + \rho^2)^2} \,.
       \ee
     In addition, there are two types of the antiinstantons:
      \be
     F_{\mu\nu}^{ab} \ =\  \frac {2i\rho^2 \bar \eta^A_{\mu\nu} \eta^A_{ab}}{(r^2 + \rho^2)^2}   \quad {\rm and} \quad 
      F_{\mu\nu}^{ab} \ =\  \frac {2i\rho^2 \bar \eta^A_{\mu\nu} \bar \eta^A_{ab}}{(r^2 + \rho^2)^2} \,.
       \ee

     Now note that, for the both types of instantons, the topological charge \p{Pont}, rewritten as
    \be
    q \ =\ -\frac 1{64\pi^2}\,  \varepsilon_{\mu\nu\alpha\beta} \int_{R^4} F^{ab}_{\mu\nu} F^{ab}_{\alpha\beta}  \, d^4x
     \ee   
     
       is equal to $q = 1$, while for the both types of antiinstantons it is $q = -1$. Consider now the second topological charge,
       
     \be
     \lb{Gauss4}
     \tilde {q} \ =\  - \frac 1 {16\pi^2} \varepsilon^{abcd}\int_{{\mathbb R}^4} F^{ab} \wedge F^{cd} \ =\    - \frac 1 {128\pi^2} \varepsilon^{abcd} \varepsilon_{\mu\nu\alpha\beta} \int  F_{\mu\nu}^{ab}   F_{\alpha\beta}^{cd} \,  d^4 x \,.
      \ee 
      Like the charge \p{Pont}, $\tilde q$ is a gauge invariant, as follows from the identity
      \be
      \lb{eps-ort}
      \varepsilon^{\tilde a \tilde b \tilde c \tilde d} O^{a \tilde a} O^{b \tilde b} O^{c \tilde c} O^{d \tilde d} \ =\  \varepsilon^{abcd}
       \ee
       for any matrix $O \in SO(4)$. 
      The  sign in \p{Gauss4} is chosen so that $\tilde q = 1$ for the first instanton in \p{2-inst}  (with $F \sim \eta\eta$). It is not difficult then to see that, for the second instanton, $\tilde q = -1$. Similarly, for the antiinstantons: $\tilde q$ have different signs for the different types.

\vspace{.5cm}
\bc
\begin{tabular}{l||c|c}
  configuration & $q$ & $\tilde q$ \\
\hline
\hline
First instanton & 1 & 1 \\
 \hline
First antiinstanton& -1 & -1 \\
\hline
Second instanton& 1 & -1 \\
\hline 
Second antiinstanton & -1 & 1 \\
\hline 

\end{tabular} 

\ec
\vspace{.4cm}

{\bf Table 2}. Topological charges of the $SO(4)$ instantons.

\vspace{3mm}

Note that the coefficient in the integral \p{Gauss4} for $\tilde q$ is the same as in the Gauss-Bonnet invariant, 
\be
 \lb{Bonnet}
 \chi^{D=4} \ =\ \frac1{128\pi^2} \, \varepsilon^{abcd} \varepsilon^{\mu\nu\alpha\beta} \int \sqrt{g}\, d^4x \, R^{ab}_{\mu\nu} \wedge R^{cd}_{\alpha\beta}  \,.
 \ee

 As for the Pontryagin index \p{Pont}, it has the same structure as the {\it Hirzebruch signature} invariant,
 \be
 \lb{Hirz}
 \tau \ =\ - \frac 1{24\pi^2} \int_M {\rm Tr} \{R \wedge R\}\,,
  \ee
  but the coefficients are different.

 \section{Conventional $SO(8)$ instantons}
 
 \setcounter{equation}0

 We  now go to eight dimensions and consider the gauge fields $A_{M= 1,\ldots,8}$ belonging to the algebra $so(8)$. The 8-dimensional counterpart of the topological charge \p{Pont} is the fourth Chern class \p{Chern4}.
  
  The construction is quite parallel to the standard instanton construction of the previous section. We note that the form Tr$\{F \wedge F \wedge F \wedge F\}$ is exact:
  \be 
  \lb{dom7}
 {\rm Tr} \{ F^4\} \ =\ d\omega_7, 
  \ee
  where\footnote{The expression \p{om7} follows from the general formula \cite{Zee} 
  \be
  \omega_{2n-1} \ =\ n \int_0^1 dt\, t^{n-1} {\rm Tr} \{A(dA - i t A^2)^{n-1} \}\,.
   \ee}
 \be
 \lb{om7}
 \omega_7 \ =\ {\rm Tr} \left\{AF^3 + \frac {2i}5 F^2 A^3 + \frac i5 AF A^2 F - \frac 15 FA^5 - \frac i{35} A^7     \right\}
  \ee
  (the products are understood here as wedge products).
  
  Let  $F$ decay rapidly at large distances so that the integral \p{Chern4} converges. Then $A$ represents at large distances a pure gauge \p{pure-gen}. The integral is reduced to \p{q7}.  
  
  The simplest  topologically nontrivial mapping $S^7  \to  SO(8) \simeq Semispin(8) $ has the form \p{g-Spin8}.  
   One of many explicit representations for the matrices $\Gamma_m$ satisfying \p{Cliff7} and \p{prod-gamma} is
            \be
            \lb{gamma-7}
            \Gamma_1 &=& -\sigma_2 \otimes \sigma_2 \otimes \sigma_2; \quad    \Gamma_2 = \mathbb{1} \otimes \sigma_1 \otimes \sigma_2; \quad   \Gamma_3 = \mathbb{1} \otimes \sigma_3 \otimes \sigma_2;
             \quad    \Gamma_4 = -\sigma_1 \otimes \sigma_2 \otimes \mathbb{1}; \nn
               \Gamma_5 &=& \sigma_3 \otimes \sigma_2 \otimes \mathbb{1}; \quad
                 \Gamma_6 = \sigma_2 \otimes \mathbb{1} \otimes \sigma_1; \quad
                   \Gamma_7 = \sigma_2 \otimes \mathbb{1} \otimes \sigma_3 \,.
                    \ee
    Seven matrices $\Gamma_m$ are among the generators of the group $Semispin(8)$. 21 other generators are 
      \be
      \lb{gen7}
    \Sigma_{mn} \ =\ \frac  i2 [\Gamma_m, \Gamma_n] \,.
     \ee
     One can also define 
     \be
         \lb{gamma-M}
         \Gamma_M \ =\ (\vec{\Gamma}, i), \qquad  \Gamma_M^\dagger  \ =\ (\vec{\Gamma}, -i)\,,
          \ee 
          Then the whole set of the generators of $Semispin(8)$ can be presented as
        \be
        \lb{gen8}
         \Sigma_{MN} \ =\ \frac i2  (\Gamma_M \Gamma_N^\dagger - \Gamma_N \Gamma_M^\dagger)\,.
          \ee   
     
    The embedding \p{g-Spin8} is quite analogous to \p{g-ball}, but the difference is that in this case    not all the generators of
    $Semispin(8)$ are   engaged and it is an injective rather than surjective map. The sign in \p{g-Spin8} is chosen such that the topological charge \p{q7} is positive.
    
     The matrix \p{g-Spin8} treated as an element of $SO(8)$ describes the rotation by the angle $\alpha$ in the plane spanned by the unit vectors $\vec{e}_8$ and $\alpha_j \vec{e}_j/\alpha$. This brings the axis $Ox_8$ in a new position. An arbitrary $SO(8)$ rotation can be presented as the superposition of such rotation and an $SO(7)$ rotation ``around" the new axis. Similarly, an arbitrary element of $Semispin(8)$ can be presented as
  \be
  \lb{repr-Spin8}
  g \in Semispin(8) \ =\ \exp\{i \alpha_j \Gamma_j\} \times \left[ g \in Spin(7) \right]\,,
   \ee
   $0 \leq \|\alpha_j\| \leq \pi$.
    
   Note that, by the variable change 
    \be
  \lb{varchange8}
 \alpha_m \ =\ \frac {x_m}{\|\vec{x}\|} \arccos \left( \frac {x_8}r  \right) \,,
  \ee
     \p{g-Spin8} can be presented in the form analogous to \p{g-krugl}:
    
 \be
 \lb{g8-krugl}
 g_8 \ =\ \frac {i \Gamma_M^\dagger x_M}{r} \ =\  \frac {\mathbb{1} x_8  + i x_m \Gamma_m}{r}
 \ee
 with $r^2  =  (x_M)^2$.   The unit vector $n_M = x_M/r$ gives a natural ``round" parameterization of $S^7$. 
 
 \vspace{1mm}
 
  The embedding \p{g-Spin8}, \p{g8-krugl} represents a   section of the fiber bundle
    \be
    \lb{fiber8}
   Semispin(8)  \ \stackrel{Spin(7)}{\longrightarrow} \ S^7 \,.
    \ee
    Note that, for $Semispin(8)$,  the representation \p{repr-Spin8} is not unique. For example, a unit element of $Semispin(8)$ can be represented in two ways:
    \be
    \lb{2ways}
    \mathbb{1}_{Semispin(8)} \ =\  \mathbb{1}_{S^7} \times  \mathbb{1}_{Spin(7)} \ =\  (-\mathbb{1})_{S^7} \times  (-\mathbb{1})_{Spin(7)}\,.
     \ee
     However, an element of $Spin(8)$, the double covering of $Semispin(8)$, is represented by {\it two} distinct matrices,
     $ \exp\{i \alpha_j \Gamma_j\}$   and $g_7 \in Spin(7)$, 
     rather than by their product, and {\it this} representation is unique. 
     
     As a result, the whole group $Spin(8)$ represents the direct product $S^7 \times Spin(7)$, the bundle \p{fiber8} is trivial and the embedding 
      \p{g-Spin8} is its global section.\footnote{The triviality of the bundle \p{fiber8} is a well known fact. It follows also from a construction based on a representation of $SO(8)$ elements as certain octonion multiplications \cite{Steen}. It is this fact that allows one to derive the result \p{ZZ}.} 
     
     \vspace{1mm}
 
 Substituting \p{g8-krugl} into \p{pure-gen}, we derive the expression for the asymptotic gauge field:  
 \be
 \lb{pure-krugl-8}
 A_M \ =\ \frac { \Sigma_{NM} x_N}{r^2} 
  \ee
 
 Now, pick up the north pole of $S^7$, $x_M = (\vec{0}, r)$. At this point, $A_8 = 0$ and
 \be
 g^{-1} \partial_m g  = -iA_m \ =\ \frac {i\Gamma_m}r\,.
  \ee
  The integrand in \p{q7} is
  \be
  \lb{integrand8}
   -(i)^7 \frac {7!}{r^7}     {\rm Tr} \, \{\Gamma_1 \Gamma_2 \Gamma_3 \Gamma_4\Gamma_5\Gamma_6\Gamma_7 \} \ =\ 
  - \frac   {8!}{r^7} 
    \ee
    [the first minus sign in \p{integrand8} comes from $\epsilon_{81234567} = -1$, cf. Eq. \p{q-krugl}].
    
     Multiplying it by the volume of $S^7$, 
    \be
    \lb{volS7}
 V_{S^7} \ =\ \frac {\pi^4 r^7}3\,,
  \ee
  and substituting in \p{q7}, we derive
  \be
  q \ =\ 1\,. 
  \ee

      There is one particular field $A_M(x \in \mathbb{R}^8)$ in the sector $q=1$ where the field density satisfies the {\it nonlinear} self-duality condition
      \be
       \lb{8-self-dual}
       F \wedge F \ =\  \star (F\wedge F)
        \ee
      It reads
      \be
 \lb{inst8}
 A_M(x) \ =\ \frac { \Sigma_{NM} x_N}{r^2+ \rho^2} \,,
  \ee  
  going to \p{pure-krugl-8} in the large $r$ limit.
  
  The field density is
  \be
  \lb{inst8-F}
  F_{MN} \ =\ \partial_M A_N - \partial_N A_M - i[A_M, A_N] \ =\ \frac {2\rho^2 \Sigma_{MN}}{(r^2 + \rho^2)^2}\,.
   \ee
  
  The gauge field configuration \p{inst8}, \p{inst8-F} is what is called  $SO(8)$ instanton  \cite{Keph,Tchr}. It realizes the minimum of the functional
  \be
  \lb{act8} 
  \int_{R^8} {\rm Tr} \, \{F \wedge F \wedge \star( F \wedge F) \} 
   \ee 
   in the sector $q=1$.
  The analogy with the BPST instanton \p{BPST} is straightforward.

  Starting from the map
  \be
  \lb{g-Spin8-anti}
  g \ =\ \frac{\mathbb{1} x_8 - i  x_m \Gamma_m}r
   \ee
   with the topological charge $q = -1$, we arrive at the {\it antiinstanton}
    \be
 \lb{antiinst8}
 A_M(x) \ =\ \frac {\tilde \Sigma_{NM} x_N}{r^2+ \rho^2} \,,
  \ee  
  where $\tilde \Sigma_{NM} = \frac i2 (\Gamma_M^\dagger\Gamma_N  - \Gamma_N^\dagger \Gamma_M )$.
  The corresponding field density is anti-self-dual, $ F \wedge F \ =\  -\star (F\wedge F)$.
   
 The configurations with an arbitrary topological charge are also known \cite{ADHM-8}. First of all, one can write a natural 8-dimensional generalisation of the 't Hooft Ansatz:
  \be
 \lb{Hooft-Ans}
 A_M(x) \ &=&\ - \tilde \Sigma_{NM} \partial_N \ln \left( 1 + \sum_{i=1}^q \frac {\rho_i^2}{(x_P - a_P^{(i)})^2} \right) \qquad {\rm and} \nn
 \quad A_M(x) \ &=&\ - \Sigma_{NM} \partial_N \ln \left( 1 + \sum_{i=1}^q \frac {\rho_i^2}{(x_P - a_P^{(i)})^2} \right)\,.
  \ee  
The topological charges of these configurations are $q$ and $-q$, respectively. When $q=1$, we reproduce the instanton of \cite{Keph,Tchr}, but written in the singular gauge. One can also write more general multiinstanton solutions generalizing the AHDM construction.

   Note, however, that all these configurations are characterized by only one integer \p{Chern4}, whereas the homotopy group \p{ZZ} implies the existence of the topological classes characterized by {\it two} integers. Such {\it generalized} $SO(8)$ instantons have not been studied so far. They are related to $Spin(7)$ instantons to be discussed in the next section.

  \section{$Spin(7)$ instantons}
 \setcounter{equation}0
 
 The generators of $Spin(7) \subset Semispin(8)$ are given in \p{gen7}. As was noted above, 
 the embedding \p{g-Spin8}, which involves seven generators of $Semispin(8)$ that are  not in the set \p{gen7}, is a global section in the fiber bundle \p{fiber8}.

    It is known, however, that the sphere $S^7$ also represents  a base for another fiber bundle 
     \be
    \lb{fiber7}
    Spin(7)  \ \stackrel{G_2}{\longrightarrow} \ S^7 \, .
     \ee 
     Each fiber of this bundle represents a subgroup $G_2 \subset Spin(7)$ that does not transform a particular 8-component real spinor (it is one of the definitions of $G_2$). The set of all such spinors of unit norm gives the base $S^7$ of the bundle.
     
     The bundle \p{fiber7} implies a nontrivial homotopy 
   \be
   \lb{pi7-Spin7}
   \pi_7[Spin(7)] \ =\ \mathbb{Z}\,.
    \ee
   Indeed, one can write the following exact sequence:
   \be
   \pi_7[G_2] \ \stackrel{1}{\longrightarrow} \ \pi_7[Spin(7)] \ \stackrel{2}{\longrightarrow} \ \pi_7[S^7] 
   \stackrel{3}{\longrightarrow} \pi_6[G_2]
    \ee  
    or
     \be
   0 \ \stackrel{1}{\longrightarrow} \ \pi_7[Spin(7)] \ \stackrel{2}{\longrightarrow} \ \mathbb{Z} 
   \stackrel{3}{\longrightarrow} Z_3
    \ee  
    (the homotopies of $G_2$ can be found in \cite{Mimura}).
    
    The image of the map{\bf 2} coincides with the kernel of the map {\bf 3}, which is $\mathbb{Z}$. Hence,
    $\pi_7[Spin(7)]$ is $\mathbb{Z}$ or larger. But the kernel of the map {\bf 2} coincides with the image of the map {\bf 1}, which is trivial. Hence $ \pi_7[Spin(7)]$  cannot be larger than  $\mathbb{Z}$, and we are led to
    \p{pi7-Spin7}. 
    
    \vspace{1mm}
    
    In the previous section, we discussed the embedding \p{g-Spin8} of $S^7$ into $Semispin(8)$, which was topologically nontrivial. But only two elements of
    \p{g-Spin8}: $g = \mathbb{1}$ and $g = -\mathbb{1}$ were also the elements of $Spin(7)$. Probably, the main new result of this paper is an explicit construction for a topologically nontrivial embedding $S^7 \to Spin(7)$. 
    
     \begin{figure} [ht!]
      \bc
    \includegraphics[width=.6\textwidth]{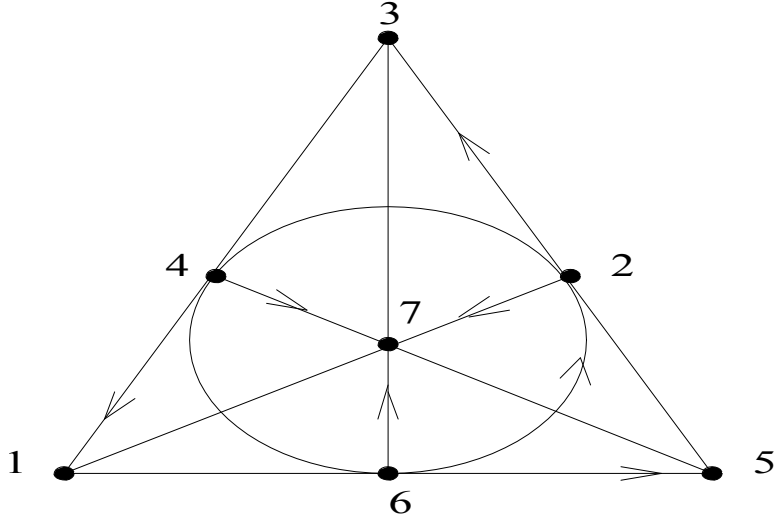}                  
     \ec
    \caption{Fano graph.}        
 \label{Fano}
    \end{figure}

  Consider the expression  
  \be
  \lb{g-Spin7hr}
  g \ =\ \exp \left\{- \frac 12 \alpha_m f_{mnk} \Gamma_n \Gamma_k \right\}\,,
   \ee
   where $f_{mnk}$ are the structure constants of the octonion algebra,
   \be
   \lb{fmnk}
   f_{165} \ =\  f_{134} \ =\  f_{127} \ =\  f_{235} \ =\  f_{246} \ =\  f_{367} \ =\  f_{475}\ =\ 1
    \ee
    and all other nonzero elements of $f_{mnk}$ are restored by antisymmetry (see the Fano graph on Fig. \ref{Fano}).
    \p{g-Spin7hr} is an element of  $Spin(7)$. Its logarithm represents a linear combination of seven $Spin(7)$ generators, $T_m \ =\   \frac 12 f_{mnk} \Sigma_{nk}$.
   Explicitly:
   \be
   \lb{Tj}
   &&T_1 \ =\ \Sigma_{34}  + \Sigma_{27}  - \Sigma_{56}, \qquad T_2 \ =\  \Sigma_{35} + \Sigma_{46}  -\Sigma_{17},  \qquad   T_3 \ =\   \Sigma_{67} -\Sigma_{14} - \Sigma_{25}, \nn
   &&T_4 \ =\ \Sigma_{13} - \Sigma_{26} - \Sigma_{57}, \qquad T_5 \ =\ \Sigma_{23} + \Sigma_{16} + \Sigma_{47}, \nn
   &&T_6 \ =\  \Sigma_{24}   -\Sigma_{15}- \Sigma_{37}, \qquad T_7 \ =\ \Sigma_{12} + \Sigma_{36} - \Sigma_{45}.
    \ee 
    Each $T_m$ is a linear combination of three {\it commuting}  generators of $Spin(7)$.

  \begin{thm}
  The expression \p{g-Spin7hr} considered in the range $0 \leq \|\vec{\alpha}\| \leq \pi$
  represents an embedding of $S^7$ into $Spin(7)$. Locally, it represents a section of the fiber bundle \p{fiber7}.
   \end{thm}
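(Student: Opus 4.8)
Throughout I abbreviate the generator in the exponent as $S_m:=-\tfrac12 f_{mnk}\Gamma_n\Gamma_k=iT_m\in so(8)$, so that $g(\vec\alpha)=\exp\{S(\vec\alpha)\}$ with $S(\vec\alpha)=\alpha_mS_m$ \emph{linear} in $\vec\alpha$ and $S(\vec\alpha)=\|\vec\alpha\|\,S(\hat\alpha)$. As the paper already records, $\mathfrak m:=\mathrm{span}\{S_1,\dots,S_7\}$ is the $\mathrm{Ad}(G_2)$-invariant complement of $g_2$ inside $spin(7)$; since $\dim\mathfrak m=7$, $\mathfrak m$ is a copy of the $\mathbf 7$ of $G_2$, and because each $T_m$ from \p{Tj} is a sum of three $\Sigma$'s over a disjoint index pair, with the $21$ pairs distributed one per $T_m$, one has $\mathrm{Tr}(S_mS_{m'})\propto\delta_{mm'}$, i.e.\ the $S_m$ are an orthonormal frame of $\mathbf 7$. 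I will also use the $G_2$-fixed unit spinor $\psi_0$ and the branching $\mathbf 8=\mathbf 1\oplus\mathbf 7$ of the spinor module. The proof splits into: the spectrum of $S(\hat n)$, from which descent to $S^7$ follows; injectivity; and the immersion and section properties. For the first point: since $G_2\supset SU(3)$ acts transitively on the unit sphere of $\mathfrak m$ ($S^6=G_2/SU(3)$), every $S(\hat n)$, $\|\hat n\|=1$, is $\mathrm{Ad}(G_2)$-conjugate to $S_1=-(\Gamma_3\Gamma_4+\Gamma_2\Gamma_7-\Gamma_5\Gamma_6)$; and $T_1=\Sigma_{34}+\Sigma_{27}-\Sigma_{56}$ is a sum of three pairwise-commuting operators, each with eigenvalues $\pm1$, whose product equals $-\Gamma_1$ by \p{prod-gamma}. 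Splitting $\mathbb{R}^8$ into their joint eigenspaces — the relation $\epsilon_{34}\epsilon_{27}\epsilon_{56}=-\epsilon_{\Gamma_1}$ and $\mathrm{Tr}\,\Gamma_1=0$ fix the multiplicities — gives that $T_1$ (hence every $S(\hat n)$, up to the factor $i$) has spectrum $\{\pm3,\pm1,\pm1,\pm1\}$. Consequently $\mathrm{Tr}\,\exp\{tS(\hat n)\}=h(t):=6\cos t+2\cos3t$ for all unit $\hat n$, and $\exp\{\pi S(\hat n)\}$ has all eigenvalues $e^{\pm i\pi}=e^{\pm 3i\pi}=-1$, so $g(\pi\hat n)=-\mathbb{1}$ independently of $\hat n$. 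Thus $g$ is constant on the boundary and descends to $\bar g\colon S^7=B^7_\pi/\partial B^7_\pi\to Spin(7)$, where $B^7_\pi=\{\|\vec\alpha\|\le\pi\}$; it is smooth, since near the collapsed point, in the chart $\vec\zeta$ supplied by the antipodal polar map, $\bar g=-\exp\{-S(\vec\zeta)\}$.

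Next, injectivity. By $G_2$-equivariance and Schur (using $\mathrm{Sym}^2\mathbf 7=\mathbf 1\oplus\mathbf{27}$ and $\mathbf 8=\mathbf 1\oplus\mathbf 7$, with $\psi_0$ spanning the $\mathbf 1$), the vectors $\nu_m:=S_m\psi_0$ lie in $\psi_0^\perp$, are mutually orthogonal of equal length, and $\{S_m,S_{m'}\}\psi_0\in\mathbb{R}\psi_0$; a single scalar computation ($\|\nu_m\|^2=9$, i.e.\ $S_m^2\psi_0=-9\psi_0$) then yields $S(\vec\alpha)^2\psi_0=-9\|\vec\alpha\|^2\psi_0$, so $\mathrm{span}(\psi_0,S(\vec\alpha)\psi_0)$ is $S(\vec\alpha)$-invariant and
\[
g(\vec\alpha)\,\psi_0=\cos(3\|\vec\alpha\|)\,\psi_0+\frac{\sin(3\|\vec\alpha\|)}{3\|\vec\alpha\|}\,\alpha_m\nu_m .
\]
Now suppose $g(\vec\alpha)=g(\vec\beta)$ on $B^7_\pi$. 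For $\|\vec\alpha\|<\pi$ the eigenvalue $e^{i\|\vec\alpha\|}$ is $\ne-1$, so $-\mathbb{1}$ is hit only on $\partial B^7_\pi$ (a single point of $S^7$), and we may take both arguments interior. Then $\mathrm{Tr}\,g(\vec\alpha)=h(\|\vec\alpha\|)$, and since $h'(t)=-24\sin t\cos^2 t<0$ on $(0,\pi)$ except at $t=\pi/2$, $h$ is strictly decreasing on $[0,\pi]$, forcing $\|\vec\alpha\|=\|\vec\beta\|=:a$. If $\sin3a\ne0$, the displayed formula and linear independence of the $\nu_m$ give $\vec\alpha=\vec\beta$. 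If $\sin3a=0$ with $0<a<\pi$ (so $a\in\{\pi/3,2\pi/3\}$, $\sin a\ne0$), the equal operators $g(\vec\alpha)=g(\vec\beta)$ have the same $e^{\pm ia}$-eigenspace $Q$, on which $g|_Q=\cos a\,\mathbb{1}+\sin a\,S(\hat\alpha)|_Q$, so $S(\hat\alpha)|_Q=S(\hat\beta)|_Q$; hence $S(\vec\alpha-\vec\beta)=S(\vec\alpha)-S(\vec\beta)$ is supported on the $2$-plane $Q^\perp$, of rank $\le2$, whereas every nonzero $S(\vec\gamma)$ has the full-rank spectrum $\|\vec\gamma\|\{\pm3i,\pm i,\pm i,\pm i\}$ — so again $\vec\alpha=\vec\beta$. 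Thus $\bar g$ is injective.

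For the embedding and section properties: a continuous injection of the compact $S^7$ into the Hausdorff group $Spin(7)$ is a homeomorphism onto its image, and $d\bar g$ is injective everywhere — at the collapsed point by the chart formula $\bar g=-\exp\{-S(\vec\zeta)\}$, and elsewhere because $\hat\alpha\mapsto S(\hat\alpha)$ is an immersion and $d\exp$ is invertible on the slice $\{aS(\hat n):0<a<\pi\}$ — so $\bar g$ is a smooth embedding. For the section statement, note that by the displayed formula the restriction of $\vec\alpha\mapsto g(\vec\alpha)\psi_0$ to $\{\|\vec\alpha\|<\pi/3\}$ is precisely the standard polar parameterization of $S^7\setminus\{-\psi_0\}$ (since $\{\nu_m/3\}$ is an orthonormal basis of $\psi_0^\perp$), hence a diffeomorphism; composing $g$ with its inverse produces a smooth section of \p{fiber7} over this chart of the base. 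This is the asserted local section, and it is only local: the factor $3$ makes $\pi\circ g$ wind $S^7$ three times, so $\bar g$ is not a section, consistent with the nontrivial characteristic class entering the exact sequence behind \p{pi7-Spin7}.

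I expect the main obstacle to be exactly this factor $3$ — the spectrum $\{\pm3i,\pm i,\pm i,\pm i\}$ of $S(\hat n)$ rather than the simple $\{\pm i\}$ of the $SU(2)$ and conventional $SO(8)$ (fiber \p{fiber8}) constructions. It is what prevents injectivity from being read off directly from the ``round'' picture $g(\vec\alpha)\psi_0$, necessitating the trace‑monotonicity step and, at the two exceptional radii $a\in\{\pi/3,2\pi/3\}$, the rank argument; and it is the same $3$ that makes the section only local and that surfaces as $\tilde q(g_7)=-3$ in Table 1.
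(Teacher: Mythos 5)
Your argument is correct in substance and goes considerably beyond what the paper itself establishes, partly by a different route. For the boundary collapse $g(\pi\hat n)=-\mathbb{1}$ you and the paper both reduce to $\hat n=e_1$ via $G_2$-transitivity; the paper then just factors $e^{i\pi T_1}$ into three commuting rotations each contributing $-\mathbb{1}$, while you extract the full spectrum $\{\pm 3i,\pm i,\pm i,\pm i\}$ of $S(\hat n)$, which then powers the rest of your proof. The paper does not prove injectivity at all --- its proof of the embedding claim consists solely of the boundary collapse --- so your trace-monotonicity step with $h(t)=6\cos t+2\cos 3t$ and the rank argument at the exceptional radii $a=\pi/3,\,2\pi/3$ are genuinely new content, and they are correct. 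For the section claim the paper works at the Lie-algebra level, showing the $T_m$ of \p{Tj} are orthogonal to and not contained in $g_2$ so that the decomposition \p{repr-Spin7} holds, whereas you exhibit an honest section over the chart $S^7\setminus\{-\psi_0\}$ by inverting the orbit map $\vec\alpha\mapsto g(\vec\alpha)\psi_0$ on $\{\|\vec\alpha\|<\pi/3\}$; your version also makes transparent, via the triple winding, why the section is only local, which the paper establishes separately through Eq.~\p{triple}.

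The one step you must repair is the parenthetical claim that $d\exp$ is invertible at every point of the punctured ball $\{aS(\hat n):0<a<\pi\}$. It is not: in the normalization in which $S(\hat n)$ has spinor eigenvalues $\pm i,\pm 3i$, the operator ${\rm ad}_{S(\hat n)}$ on $spin(7)$ has eigenvalues $0,\pm 2i,\pm 4i$ (read off from $\Lambda^2$ of the vector representation, whose weights are $0,\pm 2$), so at $a=\pi/2$ the eigenvalues $\pm 4ia=\pm 2\pi i$ occur and $d\exp_{S(\vec\alpha)}$ acquires a kernel, namely the weight-$(\pm 4)$ subspaces. The immersion property nonetheless survives there because those subspaces meet $\mathfrak m$ trivially: each $T_m$ with $m\ge 2$ contains exactly one term $\Sigma_{1k}$, which lies in the weight-$(\pm 2)$ subspace, these six $\Sigma_{1k}$ are linearly independent, and $T_1$ itself has weight $0$; hence no nonzero $\sum_m c_mT_m$ can sit entirely in the weight-$(\pm 4)$ eigenspaces, and by ${\rm Ad}(G_2)$-invariance of $\mathfrak m$ the same holds for every $\hat\alpha$. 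With that restriction of the $d\exp$ argument to directions in $\mathfrak m$ spelled out, your proof of the embedding claim is complete.
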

   
   \begin{proof}

   To prove that \p{g-Spin7hr} is an embedding of $S^7$, we only have  to prove that, for any $\vec{\alpha}$ of norm $\pi$, $g(\vec{\alpha}) = -\mathbb{1}$, as it was for the mappings \p{g-ball} and \p{g-Spin8}. To this end, we notice that $f_{mnk}$ is an invariant tensor of $G_2$ and hence \p{g-Spin7hr} is a $G_2$ invariant depending on a 7-dimensional vector $\alpha_m$. $G_2$ may be defined as the subgroup of $SO(7)$  involving only the rotations leaving invariant the ``mixed product" of three different 7-vectors, $ I \ =\   f_{mnk} A_m B_n C_k$. But we have only one vector at our disposal, and  \p{g-Spin7hr} may depend only on the norm of $\alpha_m$, 
   but not on its orientation. Choosing $\alpha_m = (\pi, 0,0,0,0,0,0)$, we derive
   \be
   g(\|\alpha_m\| = \pi) = e^{i\pi T_1}  = \exp\{-\pi\Gamma_3 \Gamma_4\} \exp\{-\pi\Gamma_2 \Gamma_7\}\exp\{\pi\Gamma_5 \Gamma_6\} \ 
    = \ (-\mathbb{1})^3 = -\mathbb{1}.
    \ee

       To prove that \p{g-Spin7hr} represents a local section of the bundle \p{fiber7}, we need to prove that the 
   generators \p{Tj}  are {\it not} in the subalgebra $g_2 \subset spin(7)$, but are orthogonal to it:
    \be
    \lb{orto}
     {\rm Tr}\{ T_j h\} \ =\ 0, \qquad {\rm if} \ h \in g_2 \,.
     \ee
    Choosing  $f_{mnk}$ as in \p{fmnk}, the $G_2$ invariant acquires the following explicit form  
      
       \be
       \lb{fABC}
    I \ =\   f_{jkl} A_j B_k C_l \ =\ [165] + [134] + [127] + [235] + [246] + [367] + [475],
     \ee
     where
     \be 
     \lb{165}
     [165] \ =\ (A_1 B_6 - A_6 B_1)C_5 + (B_1 C_6 - B_6 C_1)A_5 + (C_1 A_6 - C_6 A_1)B_5\,,
       \ee
       etc.     The explicit form of the generators of $G_2 \subset SO(7)$, which do not transform $I$, is 
       \be
       \lb{gen-G2}
       \tilde \Sigma_{34} +  \tilde \Sigma_{56}, \quad \tilde \Sigma_{35} -  \tilde \Sigma_{46}, \quad    \tilde\Sigma_{25} - \tilde\Sigma_{14}, \quad \tilde\Sigma_{13} +  \tilde\Sigma_{26}, \ 
       \tilde\Sigma_{23} -  \tilde\Sigma_{16},  \quad \tilde\Sigma_{15} +  \tilde\Sigma_{24}, \quad \tilde\Sigma_{12} -  \tilde\Sigma_{36}, \nn
     \tilde\Sigma_{34} -  \tilde\Sigma_{27}, \quad \tilde\Sigma_{35} +  \tilde\Sigma_{17}, \quad \tilde\Sigma_{14} +  \tilde\Sigma_{67}, \quad \tilde\Sigma_{13} +  \tilde\Sigma_{57}, \quad 
       \tilde\Sigma_{23} -  \tilde\Sigma_{47}, \quad  \tilde\Sigma_{37}  - \tilde\Sigma_{15} , \quad \tilde\Sigma_{12} +  \tilde\Sigma_{45}, \ \ \ \ 
      \ee
      where $\tilde \Sigma_{mn}$ are the counterparts of \p{gen7} in the vector representation.
      
      Consider e.g. the action of the operator $\tilde \Sigma_{34}$ on $I$. The structures [165], [134] and [127] are left invariant. But four other structures are not:
       $$ 
       \tilde \Sigma_{34}\, I \ \propto \ [245] - [236] + [467] - [375].
        $$
        One can observe, however, that this  exactly coincides with $-\tilde \Sigma_{56} \, I$ and  $\tilde \Sigma_{27} \, I$, so that $I$ does not transform under the action of the generators $ \tilde \Sigma_{34} +  \tilde \Sigma_{56}$ or $  \tilde\Sigma_{34} -  \tilde\Sigma_{27}$. On the other hand,
        $$ ( \tilde \Sigma_{34} +  \tilde \Sigma_{27} -  \tilde \Sigma_{56}  ) \, I \ \propto \ 3([245] - [236] + [467] - [375]) \ \neq \ 0. $$
        The same is true for six other generators in \p{Tj} and their arbitrary linear combinations.
        
        The property \p{orto} can be verified explicitly.
        
        Hence the generators \p{gen-G2} together with \p{Tj} form an orthogonal basis in $spin(7)$ and  
        an arbitrary element of $Spin(7)$ can be represented as
     \be
  \lb{repr-Spin7}
  g \in Spin(7) \ =\ g_7 \times \left[ g \in G_2 \right]\,,
   \ee
     where $g_7$ is the embedding \p{g-Spin7hr}. 
     
     We have thus proven that \p{g-Spin7hr} represents a local section of the bundle \p{fiber7}. We will see a bit later that the representation \p{repr-Spin7} is not unique and hence    \p{g-Spin7hr} is not its global section.
    
     \end{proof}

   By applying an approriate $G_2$ rotation, the exponential in \p{g-Spin7} can be disentangled for an arbitrary $\vec{\alpha}$, the condition $\alpha \equiv \|\vec{\alpha}\| = \pi$ is not necessary. For $\alpha_m = \ (\alpha_1,0,0,0,0,0,0)$, we derive
   \be
   \lb{g-al1}
   g_7 \ =\ (\cos \alpha_1 \, \mathbb{1} - \sin \alpha_1 \, \Gamma_3 \Gamma_4) (\cos \alpha_1  \, \mathbb{1} - \sin \alpha_1 \, \Gamma_2 \Gamma_7) (\cos \alpha_1  \, \mathbb{1} + \sin \alpha_1 \, \Gamma_5 \Gamma_6)   \nn
   = \ \cos^3 \alpha_1\, \mathbb{1} + i \cos^2 \alpha_1 \sin \alpha_1 \, T_1 +  \cos \alpha_1 \sin^2 \alpha_1 \, \Gamma_1 T_1  + i \sin^3 \alpha_1 \, \Gamma_1  .
    \ee
    The third and the fourth term in the R.H.S. of this identity were derived using \p{prod-gamma}.
    
    For an arbitrary $\vec{\alpha}$, we obtain
    \be
    \lb{g-explicit}
    g_7  = \ \cos^3 \alpha\, \mathbb{1} + i \cos^2 \alpha \sin \alpha \, \frac {  \alpha_m  T_m}\alpha  +  \cos \alpha \sin^2 \alpha \,  \frac {\alpha_m \alpha_n \Gamma_m  T_n}{\alpha^2}  + i \sin^3 \alpha \, \frac {\alpha_m\Gamma_m} \alpha .
     \ee
  This form is convenient for rewriting $g$ in the Cartesian coordinates $x_M$. Using \p{varchange8}, we derive
   \be 
   \lb{g-Spin7-round}
   g_7 \ =\ \frac { x_8^3\, \mathbb{1} + i x_8^2 \, x_m T_m  + x_8 \, x_m   x_n  \Gamma_m \, T_n  + i x_m x_m\,  x_n \Gamma_n  } {r^3}.
    \ee 
    
    We can explain now why \p{g-Spin7} is not a global section.
    
    The subgroup $G_2 \subset Spin(7)$ spanned by the generators $ \Sigma_{34} +  \Sigma_{56}$ etc. does not transform the spinor
    \be
    \psi_0 \ =\ \left(  \begin{array}{c} 0 \\ 1 \end{array} \right) \otimes  \left(  \begin{array}{c} 0 \\ 1 \end{array} \right) \otimes 
     \left(  \begin{array}{c} 0 \\ 1 \end{array} \right)\,.
      \ee
      Let us direct $\alpha_m$ along the first axis and consider the action of a group element \p{g-al1} on this spinor. One can derive
      \be
\lb{triple}
      g(\alpha_1) \psi_0 \ =\ \cos 3 \alpha_1\, \psi_0 + \sin 3 \alpha_1 \,\psi_1\,,
       \ee
       where 
       \be
       \psi_1 \ =\ \left(  \begin{array}{c} 1 \\ 0 \end{array} \right) \otimes  \left(  \begin{array}{c} 1 \\ 0 \end{array} \right) \otimes 
     \left(  \begin{array}{c} 1 \\ 0 \end{array} \right)\,. 
      \ee
      We see that  three different elements in the set \p{g-al1} with $\alpha_1 = 0, \pm 2\pi/3$,  and many more different elements in the set \p{g-explicit} keep $\psi_0$ intact\footnote{Any element belonging to $S^6$ with $\|\alpha_m\| = 2\pi/3$ does so.} and hence belong to one and the same $G_2$ fiber. For a global section, this would not be possible. 
       In contrast to the bundle \p{fiber8} that could be trivialized, the bundle \p{fiber7} cannot. $Spin(7)$ is not homeomorphic to the direct product $S^7 \times G_2$ (indeed, homotopy groups of these two spaces are different \cite{Mimura}) and a global section of the bundle \p{fiber7} does not exist.

      \vspace{1mm}

Our next task is to evaluate the topological charge \p{q7} 
for the map \p{g-explicit}. A nonzero value of the charge will assure the topological nontriviality of the mapping.\footnote{Not any mapping of a sphere into a Lie group is topologically nontrivial. For example, the expression
\be
\lb{g-S2}
g \ =\ \exp\{i(\alpha_1 \sigma_1 + \alpha_2 \sigma_2)\} 
 \ee
 with $0 \leq \sqrt{\alpha_1^2 + \alpha_2^2} \leq \pi$ realizes a mapping $S^2  \to SU(2)$ related to the fiber bundle
 $SU(2) \ \stackrel{U(1)}{\longrightarrow} \ S^2$. However, there is no topological charge associated with the mapping \p{g-S2} and this mapping is topologically trivial (contractible to a point by a continuous deformation): $\pi_2[SU(2)] = 0$.}

The calculation of $q$ for the map \p{g-Spin7} is somewhat more complicated than for the map \p{g-Spin8}, but we can capitalize again on the rotational invariance of the integrand, calculate it for 
$\alpha_m = \alpha \delta_{m1}$, multiply the integrand 
    by the volume of $S^6$ and then integrate it over $\alpha$. With the help of Mathematica, we derived the result
 \be
  \lb{q-intal7}
  q[g_7] \ =\  \frac {512}{75\pi}\, \int_0^\pi \sin^6 \alpha \cos^2 \alpha (2 \cos^4 \alpha + 5 \cos^2 \alpha + 2) \, d\alpha \ =\ 1\,.
   \ee
For the embedding $g_8$, it was not necessary, but we could also calculate the topological charge $q[g_8]$ by the same method  to derive
 \be
  \lb{q-intal8}
  q[g_8] \ =\  \frac {16}{5\pi}\, \int_0^\pi \sin^6 \alpha \, d\alpha \ =\ 1\,.
   \ee

For the maps $g = g_7^2$ and $g = g_8^2$, the topological charge is $q =2$. One can be convinced that 
$q[g_7^2]$ and $q[g_8^2]$ are given by the integrals where the integrands are related to the integrands in \p{q-intal7}, \p{q-intal8} as 
 \be
f_2(\alpha) \ =\ 2 f_1 (2\alpha) \,.
 \ee
 
 Having the expression \p{g-explicit} for $g_7(x)$ in hand, one can derive the expressions for the gauge potentials
   \be
\lb{asympt}
 A_M = ig_7^{-1} \partial_M g_7
  \ee
  at the distant 7-sphere. However, they 
  are rather complicated and we will not quote them here.

All topologically nontrivial $spin(7)$ gauge field configurations belonging to the sector $q=1$ have the asymptotics \p{asympt}. There is a distinguished configuration realizing the minimum of the functional \p{act8} and satisfying the nonlinear self-duality conditions \p{8-self-dual}. This is the $Spin(7)$ instanton. Its explicit form is, however, difficult to find.

The embedding
 \be
  g \ =\ \exp \left\{ \frac 12 \alpha_m f_{mnk} \Gamma_n \Gamma_k \right\}\,,
   \ee
   has the topological charge $q  = -1$. It brings about the anti-self-dual antiinstanton.

Note that the maps $S^7 \to Spin(7)$  are also maps  of $S^7$  into $SO(8)$.
As was mentioned above, the generalized topologically nontrivial maps  $S^7 \to SO(8)$ and the corresponding instantons are characterized by two integer charges $(q,\tilde q)$. The Chern classes $q$ of the instanton \p{inst8}  and of the instanton, corresponding to the embedding \p{g-Spin7}  are both equal to one.

But the values of the second charge \p{Gauss-S7} are different for these two embeddings.

Let us first explain how \p{Gauss-S7} is derived. The integrand in \p{Gauss8} is an 8-form, which is closed in $\mathbb{R}^8$. Due to the trivial topology of $\mathbb{R}^8$, it is also exact,
 \be
\varepsilon^{a_1  \cdots  a_8} \,F^{a_1 a_2} \wedge F^{a_3 a_4} \wedge F^{a_5 a_6}  \wedge F^{a_7 \,a_8} \ =\ d \tilde \omega_7\,,
\ee
and the integral \p{Gauss8} only depends on the gauge field $A_{S^7}$ at the distant 7-sphere. 

We can find $\tilde \omega_7$ by the same method as $\omega_7$ in \p{om7} was found.
Consider instead of $\mathbb{R}^8$ an open 8-dimensional ball of unit radius and consider there the gauge potential $A_{B^8} = r A_{S^7}$, where $r$ is the distance from the center.
The field density form is then
\be
F_{B^8} \ =\ dr \wedge A_{S^7} + r dA_{S^7} - ir^2 A_{S^7} \wedge  A_{S^7} \,.
\ee
We derive
\be
\varepsilon^{a_1  \cdots  a_8} \,F^{a_1 a_2} \wedge F^{a_3 a_4} \wedge F^{a_5 a_6}  \wedge F^{a_7 \,a_8}  =  4 \varepsilon^{a_1  \cdots  a_8} dr \wedge A_{S^7}^{a_1a_2}   \wedge (r dA_{S^7} - ir^2 A_{S^7} \wedge  A_{S^7} )^{a_3a_4}  \nn
\wedge
(r dA_{S^7} - ir^2 A_{S^7} \wedge  A_{S^7} )^{a_5a_6} \wedge (r dA_{S^7} - ir^2 A_{S^7} \wedge  A_{S^7} )^{a_7a_8}  \,.
 \ee
We are interested only with $\tilde \omega_7$ at the boundary of the ball and assume that the field density vanishes there. Using this, we can substitute $  iA_{S^7} \wedge  A_{S^7}$
for $d A_{S^7}$ and integrate over $r$. We derive
 \be
\tilde \omega_7[{\rm boundary}] \ =\ 
- 4i \int_0^1 (r^2 - r)^3 \,  dr     \ \varepsilon^{a_1  \cdots  a_8} A^{a_1 a_2} \wedge \nn
 (A \wedge A)^{a_3a_4} \wedge (A \wedge A)^{a_5a_6}  \wedge (A \wedge A)^{a_7a_8} \ = \nn
 \frac i{35}  \, \varepsilon^{a_1  \cdots  a_8} A^{a_1 a_2} \wedge (A \wedge A)^{a_3a_4} \wedge (A \wedge A)^{a_5a_6}  \wedge (A \wedge A)^{a_7a_8} \,.
\ee
This gives \p{Gauss-S7}.

The calculation gives $\tilde q[g_8] = 1$, while

 \be
  \lb{tild-q-intal7}
 \tilde q[g_7] \ =\  -\frac {512}{25\pi}\, \int_0^\pi \sin^6 \alpha \cos^2 \alpha (2 \cos^4 \alpha + 5 \cos^2 \alpha + 2) \, d\alpha \ =\ -3\,.
   \ee
We see  the same integral as \p{q-intal7} but with an additional factor -3 !
Obviosly, the presence of this factor is related to the fact that \p{g-Spin7} is a local rather than global section of the bundle \p{fiber7} and the presence of triple intersections of the embedding \p{g-Spin7} with any $G_2$ fiber, as displayed in Eq.\p{triple}.

Consider now the composite map $g = g_7 g_8$ and calculate its topological charges. We derive

\be
  \lb{q-intal78}
  q[g_7 g_8] \ =\  - \tilde q[g_7 g_8]  \ =\ \frac {32}{5\pi}\, \int_0^\pi \sin^6 (2\alpha) \, d\alpha \ =\ 2\,,
   \ee
 the same for $g = g_8 g_7$  and the same up to the sign for $g = g_8^{-1} g_7^{-1}$.  

 Finally, for the map $g = g_7^{-1} g_8$ or $g = g_8 g_7^{-1}$  the topologocal charges are
 \be
q[g_7 g_8^{-1}] \ =\ \int_0^\pi 0\, d\alpha \ =\ 0
 \ee
and 
\be
\tilde q[g_7^{-1} g_8]  \ =\  \frac {64}{5\pi}\, \int_0^\pi \sin^6 (2\alpha) \, d\alpha \ =\ 4 \,.
 \ee
Changing $g \to g^{-1}$ reverses the sign of the both charges.

These results were put together in Table 1. 

We see that the topological charges of the composite maps coincide with the sums of the individual topological charges, as it, of course, should be. For a generic composition, $g = g_7^n g_8^m$,
\be
q(g_8^n g_7^m) \ =\ n+m, \qquad \tilde q(g_8^n g_7^m) \ =\ n - 3m\,.
\ee

  \section{Other groups and dimensions}
 \setcounter{equation}0

As was mentioned, $\pi_{4n-1}[SO(4n)] = \mathbb{Z} \times \mathbb{Z}$, which inplies that the  mappings $S^{4n-1} \to SO(4n)$ and associated instantons are characterized by two integer invariants. It is not too difficult to generalize for higher $n$ the conventional $SO(8)$ instantons \cite{Tchr,OSe}.  All the formulas are basically the same as in Sect. 3.1. where one has to replace 7-dimensional gamma matrices by $4n-1$-dimensional gamma matrices $\Gamma_{m = 1,\ldots,4n-1}$. Note that these matrices have large dimension $2^{2n-1} \times
2^{2n-1}$ and the formulas like \p{inst8}, \p{inst8-F} describe now gauge potentials and field densities in the spinor representation, which is not equivalent for $n \neq 2$ to the vector representation.

The (anti-)instantons realize here the minimum of the functional
\be
  \lb{act-} 
  \int_{R^{4n}} {\rm Tr} \, \{\overbrace{F \wedge \ldots \wedge F}^n\ \star( \overbrace{F  \wedge \ldots \wedge F}^n ) \}
   \ee   
The field densities satisfy the following nonlinear self-duality conditions
\be
       \lb{4r-self-dual}
       \overbrace{F \wedge \ldots \wedge F}^n \ =\  \pm \star ( \overbrace{F \wedge \ldots \wedge F}^n)\,.
\ee

There should exist also another type of $Spin(4n)$ instantons which are simultaneosly $Spin(4n-1)$ instantons and which are associated with the multidimensional analogs of the mapping \p{g-Spin7}. However, explicit expressions for such mappings, not speaking of the explicit expressions for the instantons are not known. 

We can stay in $4n$ dimensions but consider  the gauge group that is smaller than $SO(4n)$. For example, one can take   the group $SU(2n)$. In this case, $\pi_{4n-1}[SU(2n)] = \mathbb{Z}$ and there is only one type of instanton  
\cite{Takesue}.

Alternatively, one can take the still smaller gauge group $Sp(n) \subset SU(2n)$, which also enjoys a nontrivial homotopy $\pi_{4n-1}[Sp(n)] = \mathbb{Z}$ related to the fiber bundle
$$ Sp(n)  \ \stackrel{Sp(n-1)}{\longrightarrow} \ S^{4n-1}\,,$$
which implies the presence of instantons.\footnote{One can notice in this regard that $Sp(n)$ is the holonomy group of a hyper-K\"ahler manifold of dimension $4n$. This suggests that the instantons with the gauge group $Sp(n)$ exist not only in $R^{4n}$, but in any $4n$-dimensional hyper-K\"ahler manifold.}

It would be interesting to make the ADHM-inspired construction of Ref. \cite{Takesue} for $SU(2n)$ instantons more explicit and generalize it to the group $Sp(n)$.

 One can also construct topologically nontrivial gauge field configurations in  $R^{4n-2}$. In  6 dimensions, they are associated with the nontrivial $\pi_5[SU(3)] = \mathbb{Z}$ and the fiber bundle $SU(3) \stackrel{SU(2)}{\longrightarrow} S^5$.
 In this case, each fiber represents the  subgroup $SU(2) \subset SU(3)$ that leaves intact a unit complex vector,
    \be
 \lb{V-S5}
  V \ =\ \left( \begin{array}{c} \alpha_1 \\ \alpha_2 \\ \alpha_3 \end{array} \right) , \qquad |\alpha_1|^2 + |\alpha_2|^2 + |\alpha_3|^2 = 1,
    \ee
 and the base $S^5$ is the set of all such vectors.  An explicit expression for the mapping $S^5 \longrightarrow SU(3)$ is not so nice as in \p{g-Spin7}, but it is known \cite{Khanna}. If $|\alpha_1| > 0$, the vector \p{V-S5}  maps into the matrix\footnote{There are two similar expressions $g_{2,3}$ for the regions  $|\alpha_2| > 0$ and  $|\alpha_3| > 0$. $S^5$ is described as  a union of these  three regions.}
 
 \be
 \lb{g-SU3}
 g_1 \ =\ \left(  \begin{array}{ccc}  -\frac{ \alpha_1 \alpha_2^*} {\sqrt{1 - |\alpha_2|^2}} & - \frac{ \alpha_3^*} {\sqrt{1 - |\alpha_2|^2}}  & \alpha_1 \\
  \sqrt{1 - |\alpha_2|^2} & 0 & \alpha_2 \\
 -\frac{\alpha_3 \alpha_2^*} {\sqrt{1 - |\alpha_2|^2}}  & \frac{\alpha_1^*} {\sqrt{1 - |\alpha_2|^2}}   & \alpha_3 
   \end{array} \right) \ \in \ SU(3)\,.
   \ee
  With this expression in hand, one can in principle calculate $A = i g^{-1} dg$ and determine the topological charge of the mapping, which should give
  
   \be
   \lb{qS5}
   q \ =\ \pm \frac 1{480 \pi^3}  \int_{S^5} {\rm Tr} \{ (g^{-1} d g)^5 \} \ = \ \pm 1 \,.
    \ee

    The coefficient in \p{qS5} is obtained from  the expression \p{Chern-D}   for the third Chern class,
     \be
     \lb{Chern3}
      q \ =\ \pm \frac 1{48 \pi^3} \int  {\rm Tr} \{F \wedge F \wedge F \}\,,
       \ee
      and the identity
       \be
       {\rm Tr} \{F^3\} \ =\ d \omega_5
        \ee
        with 
        \be 
        \omega_5 \ = \ {\rm Tr}\left\{AF^2 +  \frac i2 A^3 F - \frac 1{10} A^5 \right\}
         \ee

    What one cannot do in this case is to impose a self-duality condition. This can be done and the distinguished instanton configurations can be built up only in the space of dimension $D = 4n$.
    
    \section*{Akcnowledgements}
    
    I am indebted to Robert Bryant, Jose Figueroa-O'Farrill, Maxim Kontsevich, Olaf Lechtenfeld,  Nikita Nekrasov, George Savvidy and Ilia Smilga for illuminating discussions.

\end{document}